\newtheorem{theorem}{Theorem}[section]
\newtheorem{lemma}[theorem]{Lemma}
\newtheorem{corollary}[theorem]{Corollary}
\newtheorem{remark}[theorem]{Remark}
\def\CC{ {\mathbb C} }
\def\RR{ {\mathbb R} }
\def\ZZ{ {\mathbb Z} }
\numberwithin{equation}{section}
\begin{document}


%
\title{Phase Retrieval from  Multiple-Window  Short-Time Fourier Measurements}
\author{Lan Li, Cheng Cheng,  Deguang Han, Qiyu Sun (Member, IEEE) and Guangming Shi (Senior Member, IEEE)
\thanks{Li and Shi is with the School of Electronic Engineering, Xidian Univeristy, Xi'an 710071, China;
\and Cheng,  Han and Sun are with the  Department of Mathematics,
University of Central Florida,
Orlando 32816, Florida, USA.
Emails:  lanli98@126.com; cheng.cheng@knights.ucf.edu;  deguang.han@ucf.edu; qiyu.sun@ucf.edu; gmshi@xidian.edu.cn. This work is partially supported by the National Natural Science Foundation of China (Grant No. 61472301), the Research Fund for the Doctoral Program of Higher Education (No. 20130203130001), and the  National Science Foundation (DMS-1403400 and DMS-1412413).}}

\maketitle

\begin{abstract}
In this paper, 
we introduce two symmetric directed
graphs depending on supports of  signals and  windows, and we show that the  connectivity of those graphs provides either necessary or sufficient conditions
 to  phase retrieval of a signal  from magnitude measurements  of its multiple-window short-time Fourier transform. 
Also we propose an algebraic reconstruction algorithm, and  provide  an error estimate to our algorithm
when  magnitude measurements are corrupted  by deterministic/random noises.
   \end{abstract}

\vskip-1mm  {\bf Keywords:} { Short-Time Fourier Transform, Phase Retrieval, Graph}

\vskip-1.5mm

\section{\bf Introduction}  

Phase retrieval  considers recovering a signal of interest from magnitudes of its (non)linear measurements.
It arises in various fields of science and engineering, such as  X-ray crystallography, coherent diffractive imaging, optics
 and more. 
 The underlying recovery  is  an ill-posed problem inherently. The signal could be  reconstructed,
 in an  efficient and robust manner,
 only  if  we  have  additional  information
about the signal  (\cite{walther63}--\cite{jaganathany15}).
  In this paper, we discuss the  phase retrieval problem for   $N$-dimensional complex signals 
\vskip-0.1in
\begin{equation}
{\bf x}=(x(0), x(1), 
\ldots, x(N-1))^\top \in \mathbb{C}^N\end{equation}
with some constraints on their supports,
\vskip-0.1in
\begin{equation} \label{graph1.def1}
V({\bf x})=\big\{n: 
\ x(n)\ne 0\big\}.\end{equation}

Given a nonzero window ${\bf w}=(w(0),w(1),\ldots,w(N-1))^\top$
  with period $N$ extension and a separation parameter $L$  between  adjacent  short-time  sections,
the short-time Fourier transform (STFT) of a  signal ${\bf x}$  is given by
\vskip-0.1in
\begin{equation}\label{stft.def} X_{{\bf w}}(Lm,k)=\frac{1}{N}\sum_{n=0}^{N-1}x(n)w(Lm-n)e^{-i2\pi kn/N},
\end{equation}
where $0\le m\le N/L-1$ and  $0\le k\le N-1$.  The
STFT has been widely used in signal/imaging processing (\cite{cohenbook, grochenigbook}).
In this paper, inspired by applications in microscopy and optical imaging, we consider reconstructing the signal ${\bf x}$ from
magnitude measurements of its multiple-window  STFT,
\begin{equation}\label{stft.data}
|X_{{\bf w}_r}(Lm,k)|, 1\le r\le R, 0\le m\le {N}/{L}-1, 0\le k\le N-1,
\end{equation}
\vskip-0.2in\noindent
where  ${\bf W}=\{ {\bf w}_r\}_{r=1}^{R}$ is a family of windows with period $N$ extension.
The above reconstruction problem has been explored with various approaches
 (\cite{lim79}--\cite{eldararxiv2015}). 
The  special case with $L=N$
is also known as phase retrieval with structured  illuminations  and  masks
(\cite{loewenbook}--\cite{candes14}).

Define the supporting length of a nonzero window  ${\bf w} $ 
with period $N$ extension by  
\vskip-0.1in
\begin{eqnarray}\label{windowsupp.def}
\label{supporting.def}  \hskip-0.18in  l({\bf w}) & \hskip-0.08in = & \hskip-0.08in \min_{0\le  l\le N-1}\big\{l+1: \ \ {\rm there \ exists}\ n'\ {\rm so\ that}\  \nonumber \\
  & \hskip-0.08in  &  \  w(n)=0 \ {\rm for\ all} \ n\not\in [n', n'+l]+N\ZZ\big\}.  \end{eqnarray}
 It is observed in \cite[Theorem 2]{eldarIEEESPL2015}
that not all $N$-dimensional signals
can be recovered, up to a global phase,  from
magnitude measurements  \eqref{stft.data}
of their multiple-window STFT if all windows ${\bf w}_r, 1\le r\le R$,
have supporting length  less than $N/2$, cf.  \cite{ccsw2016} for similar phenomenon observed when recovering signals in a shift-invariant space
from  magnitudes of their sampling data.
 In Section \ref{necessary.section} of this paper, we introduce a symmetric directed graph ${\mathcal G}({\bf x}, {\bf W}, L)$ 
 with  $V({\bf x})$ in \eqref{graph1.def1} as its vertex set,
and we show  in Theorem \ref{necessary.thm} that connectivity of the  above graph  
is a necessary condition to  phase retrievability of the signal ${\bf x}$ from
magnitude measurements  \eqref{stft.data}
of its multiple-window STFT.

 A fundamental question in  
  phase retrieval  is whether a signal  
  is uniquely determined, up to a global phase,
  by its noiseless 
   measurements \eqref{stft.data}.  For $L=R=1$, a sufficient condition was proposed in \cite[Theorem 1]{eldarIEEESPL2015}
  to recover a signal ${\bf x}$ with all  components being nonzero   from magnitude measurements  \eqref{stft.data}
of its STFT, cf. \cite[Theorem 2.4]{bojarovska2016}.
In Section \ref{sufficient.section} of this paper, we introduce a  symmetric directed subgraph $\tilde {\mathcal G}({\bf x}, {\bf W}, L)$ in \eqref{graph2.def},
and we prove in Theorem \ref{sufficient.mainthm} that, under mild  conditions
on the window family ${\bf W}$,
 connectivity of the graph  $\tilde {\mathcal G}({\bf x}, {\bf W}, L)$   is a sufficient condition to
reconstruct  the  signal ${\bf x}$, up to a global phase, from
magnitude measurements  \eqref{stft.data}
of its multiple-window STFT.  Applying Theorem \ref{sufficient.mainthm} with $V({\bf x})=\{0, 1, \ldots, N-1\}$ and $L=R=1$  leads to the result in  \cite[Theorem 1]{eldarIEEESPL2015}, see Corollary \ref{sufficient.mainthm.fullL=1}.

Consider the scenario that  magnitude measurements \eqref{stft.data}
 of the multiple-window STFT are corrupted by deterministic/random noises
 ${\pmb \epsilon}=({\pmb \epsilon}(r, m,k))$ with level $|{\pmb \epsilon}|$,
 \begin{equation}\label{noise.def}
 Y(r, m, k):= |X_{{\bf w}_r}(Lm,k)|^2+\ {\pmb \epsilon}(r, m,k),
 \end{equation}
where
$|{\pmb \epsilon}| = \max\{ |{\pmb \epsilon}(r, m,k)|: \  1\le r\le R,  0\le m\le N/L-1, 0\le k\le N-1\}$.
 Another fundamental issue in phase retrieval is
to design efficient and robust algorithms so that   a good approximation ${\bf x}_{\pmb \epsilon}$ to the original signal ${\bf x}$, up to a global phase,
 could be found
 when only noisy  
 measurements \eqref{noise.def} are available.
Designing such  reconstruction algorithms is a great challenge  in general, and
 several algorithms have been proposed in the literature, see \cite{Fienup82, eldararxiv2015, gerchbert72, berdory15} and references therein.
In Section  \ref{algorithm.section} of this paper, we propose an algebraic reconstruction algorithm
from noisy  measurements \eqref{noise.def}, and we establish an error estimate in Theorem \ref{arror.thm}
when the graph  $\tilde {\mathcal G}({\bf x}, {\bf W}, L)$  in \eqref{graph2.def} is connected.

Notation:  ${\bf A}^H$ is  Hermitian of a matrix ${\bf A}$; ${\bf a}\circ {\bf b}$ is
the componentwise (Hadamard) product of vectors ${\bf a}$ and ${\bf b}$;
$\lfloor t\rfloor$ is  the largest integer less than or equal to $t$, and  $k\ {\rm mod}\ N$ is  the remainder of the Euclidean division of  an integer $k$ by $N$.


\section {A Necessary Condition on Phase Retrieval}
\label{necessary.section}

\vskip-0.03in

Given an $N$-dimensional complex signal ${\bf x}$,
a family  ${\bf W}=\{ {\bf w}_r\}_{r=1}^{R}$  of window functions with period $N$ extension,
and  a separation parameter $L$  with $N/L\in \ZZ$,
we define a  graph
\begin{equation}\label{graph1.def}
{\mathcal G}({\bf x}, {\bf W}, L):=\big(V({\bf x}), E({\bf W}, L)\big)\end{equation}
\vskip-0.02in
\noindent with
\begin{eqnarray} \label{graph1.def2}
\hskip-0.28in  & \hskip-0.08in &  \hskip-0.08in E({\bf W}, L) :=      \Big\{(n, n')\in  V({\bf x})\times V({\bf x}): \ n\ne n' \ {\rm and}\nonumber\\
\hskip-0.28in & \hskip-0.08in & \quad  \sum_{r=1}^{R}\sum_{m=0}^{N/L-1} |w_r(Lm-n')w_r(Lm-n)|^2\ne 0\Big\},\end{eqnarray}
\vskip-0.02in
\noindent where 
 $V({\bf x})$  is given in \eqref{graph1.def1}  and
${\bf w}_r=((w_r(0), \ldots, w_r(N-1))^\top, 1\le r\le R$.  The  symmetric directed graph
${\mathcal G}({\bf x}, {\bf W}, L)$ has indices of nonzero components of the signal ${\bf x}$ as its vertices,
and  it has edges between  two distinct vertices $n$ and $n'$  only if  
$w_r(Lm-n')w_r(Lm-n)\ne 0$
for some    $1\le r\le R$ and $0\le m\le N/L-1$.


 \begin{theorem} \label{necessary.thm}
 Let   ${\bf W}=\{{\bf w}_r\}_{r=1}^R$   be a family of window functions with period $N$ extension,
 and
$L\ge 1$ be a separation parameter with $N/L\in \ZZ$.  If ${\bf x}\in \CC^N$ can be determined, up to a global phase,  from
magnitude measurements \eqref{stft.data}
of its multiple-window STFT, then
the graph ${\mathcal G}({\bf x}, {\bf W}, L)$ in \eqref{graph1.def} is connected.
 \end{theorem}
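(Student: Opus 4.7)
The natural strategy is to argue by contraposition: suppose ${\mathcal G}({\bf x},{\bf W},L)$ is disconnected, and exhibit ${\bf x}'\in\CC^N$ that produces the same multiple-window STFT magnitudes as ${\bf x}$ yet is not a global phase multiple of ${\bf x}$.

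Disconnectedness lets me partition $V({\bf x}) = V_1 \sqcup V_2$ with $V_1, V_2$ both nonempty and no edge of ${\mathcal G}({\bf x},{\bf W},L)$ crossing the partition. By the definition \eqref{graph1.def2} of the edge set this gives
\[
w_r(Lm-n)\,w_r(Lm-n')=0 \quad \text{for all } n\in V_1,\ n'\in V_2,\ 1\le r\le R,\ 0\le m\le N/L-1.
\]
From this I extract the crucial dichotomy: for each fixed pair $(r,m)$, at least one of the sets $\{n\in V_1:w_r(Lm-n)\ne 0\}$ and $\{n'\in V_2:w_r(Lm-n')\ne 0\}$ must be empty, since otherwise two nonzero factors would immediately contradict the displayed identity.

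Now fix any $\alpha\in\RR\setminus 2\pi\ZZ$ and define
\[
x'(n) = \begin{cases} e^{i\alpha} x(n), & n\in V_1,\\ x(n), & n\in V_2,\\ 0, & n\notin V({\bf x}).\end{cases}
\]
Splitting \eqref{stft.def} according to $V_1$ and $V_2$ yields $X_{{\bf w}_r}(Lm,k)=A_{r,m,k}+B_{r,m,k}$ and $X_{{\bf w}_r}'(Lm,k)=e^{i\alpha}A_{r,m,k}+B_{r,m,k}$, where $A$ and $B$ collect the contributions from indices in $V_1$ and $V_2$ respectively. For every $(r,m)$ the dichotomy forces either $A_{r,m,\cdot}\equiv 0$ or $B_{r,m,\cdot}\equiv 0$ in $k$, and in either case $|e^{i\alpha}A_{r,m,k}+B_{r,m,k}|=|A_{r,m,k}+B_{r,m,k}|$, so the measurements \eqref{stft.data} of ${\bf x}$ and ${\bf x}'$ coincide. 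Finally, since the ratios $x'(n_1)/x(n_1)=e^{i\alpha}$ and $x'(n_2)/x(n_2)=1$ differ for any $n_1\in V_1$ and $n_2\in V_2$, the vector ${\bf x}'$ is not a global phase multiple of ${\bf x}$, contradicting phase retrievability. The only step that requires any real insight is isolating the dichotomy; everything else reduces to direct bookkeeping on the STFT sum, so I do not anticipate any deeper obstacle.
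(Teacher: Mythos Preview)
Your argument is correct and follows essentially the same route as the paper: assume disconnectedness, split $V({\bf x})=V_1\sqcup V_2$, observe the per-$(r,m)$ dichotomy that the shifted window can meet only one of $V_1,V_2$, and rotate the phase on one piece to produce a signal with identical STFT magnitudes that is not a global phase multiple of ${\bf x}$. The only cosmetic difference is that the paper fixes the specific rotation $\theta=1/2$ and finishes via the equation $(1+e^{-2\pi i\beta}){\bf x}_{V_1}=(1-e^{-2\pi i\beta})({\bf x}-{\bf x}_{V_1})$, whereas your direct comparison of the ratios $x'(n_j)/x(n_j)$ is a slightly cleaner way to reach the same contradiction.
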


 \begin{proof} Suppose, on the contrary, that
 ${\mathcal G}({\bf x}, {\bf W}, L)$ in \eqref{graph1.def} is disconnected.
Then there exists a subset $V_1\subset V({\bf x})$  such that  $V_1\ne \emptyset, V({\bf x})\backslash V_1\ne \emptyset$, and there are no edges between vertices in $V_1$ and $V({\bf x})\backslash V_1$.
Let ${\bf x}_{V_1}\in \CC^N$ be the signal which  coincides with ${\bf x}$ on the indices in $V_1$ and is extended to zeros in $\{0, 1, \ldots, N-1\}\backslash V_1$. 
 Observe from \eqref{graph1.def}
that for any $1\le r\le R$ and $0\le m\le N/L-1$,
  there is an edge between  two indices of  nonzero components of ${\bf x}_\theta \circ {\bf w}_{r, Lm}$,
 where
 $${\bf w}_{r, Lm}= (w_r(Lm), \ldots, w_r(Lm-N+1))^\top.$$
Therefore
either
${\bf x}_\theta \circ {\bf w}_{r, Lm}=  e^{-2\pi i \theta} {\bf x}_{V_1} \circ {\bf w}_{r, Lm}
$
or
${\bf x}_\theta \circ {\bf w}_{r, Lm}=   ({\bf x}- {\bf x}_{V_1}) \circ {\bf w}_{r, Lm}$ by the construction of  $V_1$, where
$${\bf x}_\theta= e^{-2 \pi i\theta} {\bf x}_{V_1}+({\bf x}-{\bf x}_{V_1}), \ \theta\in \RR.$$
  Hence magnitude measurements of the multiple-window STFT  
 of the signal ${\bf x}_\theta$
are independent on $\theta\in \RR$.
This, together with  ${\bf x}_0={\bf x}$ and the  phase retrievability assumption, implies that
${\bf x}_{1/2}
= e^{-2\pi i\beta} {\bf x}$  for some $\beta\in \RR$.
Thus
\begin{equation}\label{signalv1.eq} (1+e^{-2\pi i\beta}) {\bf x}_{V_1}= (1-e^{-2\pi i\beta}) ({\bf x}-{\bf x}_{V_1}). \end{equation}
For the case that $e^{-4\pi i\beta}=1$, 
either ${\bf x}_{V_1}$ or ${\bf x}-{\bf x}_{V_1}$ is a zero signal, 
 which is a contradiction.  For the remaining case that
$e^{-4\pi i\beta}\ne 1$,
it follows from \eqref{signalv1.eq} that ${\bf x}_{V_1}$ and ${\bf x}-{\bf x}_{V_1}$ have the same support, which contradicts to the construction of ${\bf x}_{V_1}$.
\end{proof}

Given a window  family ${\bf W}$,  
 the graph ${\mathcal G}({\bf x}, {\bf W}, L)$ in \eqref{graph1.def} could be disconnected
for some signals ${\bf x}$.
As an application of Theorem \ref{necessary.thm}, we have the following result on phase retrievability,
cf.  \cite[Theorem 2]{eldarIEEESPL2015}, \cite[Proposition 2.3]{bojarovska2016}  and 
  \cite 
{nawabieee83}.


\begin{corollary}\label{necessary.cor}
 Let   ${\bf W}=\{ {\bf w}_r\}_{r=1}^{R}$   be a family of window functions with period $N$ extension such that
 $l({\bf w}_r)\le N/2$ for all $1\le r\le R$.
 Then not all $N$-dimensional signals 
can be determined, up to a global phase,  from
magnitude measurements \eqref{stft.data}
of their multiple-window STFT.
\end{corollary}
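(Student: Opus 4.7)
The plan is to apply the contrapositive of Theorem \ref{necessary.thm}: I will exhibit a single signal ${\bf x}\in\CC^N$ whose graph $\mathcal{G}({\bf x}, {\bf W}, L)$ is disconnected, which immediately forces ${\bf x}$ to be non-recoverable (up to a global phase) from the magnitude measurements \eqref{stft.data}, proving the corollary.

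To construct the bad signal, I set $n^\ast := \lfloor N/2 \rfloor$ and take ${\bf x}$ to be any vector with support exactly $V({\bf x}) = \{0, n^\ast\}$, i.e.\ two arbitrary nonzero entries at positions $0$ and $n^\ast$. The graph $\mathcal{G}({\bf x}, {\bf W}, L)$ then has exactly two vertices, and the task reduces to verifying that there is no edge between them.

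The key step is the edge-exclusion argument. By \eqref{graph1.def2}, an edge between $0$ and $n^\ast$ would require some $1\le r\le R$ and $0\le m\le N/L-1$ with $w_r(Lm)\,w_r(Lm - n^\ast)\ne 0$, so that the residues $Lm \bmod N$ and $(Lm - n^\ast)\bmod N$ both lie in the periodic support of ${\bf w}_r$. The hypothesis $l({\bf w}_r)\le N/2$ combined with integrality of the supporting length gives $l({\bf w}_r)\le \lfloor N/2\rfloor$, so by \eqref{windowsupp.def} this support is contained in an arc of at most $\lfloor N/2\rfloor$ consecutive indices modulo $N$, hence has circular diameter at most $\lfloor N/2\rfloor - 1$. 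On the other hand, translation preserves circular distance, so the circular distance between $Lm\bmod N$ and $(Lm-n^\ast)\bmod N$ equals that between $0$ and $n^\ast$, which is exactly $\lfloor N/2\rfloor$. This strictly exceeds the diameter bound, a contradiction. Therefore the edge set is empty, the two vertices are isolated, and $\mathcal{G}({\bf x}, {\bf W}, L)$ is disconnected.

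Applying Theorem \ref{necessary.thm} to this ${\bf x}$ completes the proof. I expect the main (and only) subtlety to be the parity split in the circular-distance calculation: when $N$ is even, the vertices $0$ and $N/2$ are antipodal at distance $N/2$, while for odd $N$ the choice $n^\ast=(N-1)/2$ realises the maximal circular distance $(N-1)/2$ which still beats the diameter bound $(N-3)/2$. Treating both cases uniformly via the $\lfloor N/2\rfloor$ notation avoids any case analysis and is the cleanest way to present the argument.
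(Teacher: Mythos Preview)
Your proof is correct and follows essentially the same approach as the paper: both construct the signal supported on $\{0,\lfloor N/2\rfloor\}$ and argue that the supporting-length bound $l({\bf w}_r)\le N/2$ forces $w_r(n)w_r(n-\lfloor N/2\rfloor)=0$ for all $n$, so the two-vertex graph has no edge and Theorem~\ref{necessary.thm} applies. Your circular-distance formulation simply makes explicit the one-line step the paper leaves to the reader.
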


\begin{proof}
Let  
${\bf x}_0$ 
be the signal having
 $0$-th and $\lfloor N/2\rfloor$-th components as one and other components  zero. Then the corresponding graph
${\mathcal G}({\bf x}_0, {\bf W}, L)$ in \eqref{graph1.def} has two vertices $0$ and $\lfloor N/2\rfloor$.
From the supporting length assumption 
for  
${\bf w}_r, 1\le r\le R$,
it follows that
$w_r(n)w_r(n-\lfloor N/2\rfloor)=0$ for all $0\le n\le N-1$.
Hence 
${\mathcal G}({\bf x}_0, {\bf W}, L)$ is disconnected. This together with  Theorem \ref{necessary.thm} completes the proof.
\end{proof}

\vspace{-2mm}

\section {A Sufficient Condition for Phase retrieval}
\label{sufficient.section}

Given an $N$-dimensional complex signal ${\bf x}$,
a family  ${\bf W}=\{ {\bf w}_r\}_{r=1}^{R}$  of window functions with period $N$ extension,
and a separation parameter $L$ with $N/L\in \ZZ$,
we define a  graph
\begin{equation}\label{graph2.def}
\tilde {\mathcal G}({\bf x}, {\bf W}, L):=\big(V({\bf x}), \tilde E({\bf W}, L)\big)\end{equation}
with
\begin{eqnarray} \label{graph2.def2} \hskip-0.18in  &\hskip-0.08in & \hskip-0.08in \tilde E({\bf W}, L):=\Big\{(n, n')\in V({\bf x})\times V({\bf x}):\  n\ne n'
\ {\rm and} \  \nonumber\\
\hskip-0.18in  & \hskip-0.08in &\quad  n, n'\in Lm-a({\bf w}_r) -\{0, l({\bf w}_r)-1\}+N\ZZ \  {\rm for }\   \nonumber
\\
\hskip-0.18in  & \hskip-0.08in & \qquad\   {\rm some}\  0\le m\le N/L-1 \ {\rm  and} \  1\le r\le R\Big\},\end{eqnarray}
where  $V({\bf x})$ is given in \eqref{graph1.def1} and supporting intervals $[a({\bf w}_r), a({\bf w}_r)+l({\bf w}_r)-1]+N\ZZ$ of  windows ${\bf w}_r, 1\le r\le R$,
are so chosen  that
 $0\le a({\bf w}_r)\le N-1$,
 \vspace{-1mm}
  \begin{equation}\label{supporting.def2}
  w_r( a({\bf w}_r)) \overline {w_r(a({\bf w}_r)+l({\bf w}_r)-1)}\ne 0,
  \vspace{-3mm}
  \end{equation}
  and
  \vspace{-2mm}
  \begin{equation} \label{supporting.def3}
  w_r(n)=0 \ {\rm for\ all} \  n\not\in [a({\bf w}_r), a({\bf w}_r)+l({\bf w}_r)-1]+N\ZZ.
  \end{equation}
  The existence and uniqueness of  
  $a({\bf w}_r)$
  follow  
  from \eqref{windowsupp.def}. By \eqref{graph1.def2} and \eqref{supporting.def2}, we see that 
$\tilde {\mathcal G}({\bf x}, {\bf W}, L)$ 
 is a symmetric directed subgraph of the graph
${\mathcal G}({\bf x}, {\bf W}, L)$ in \eqref{graph1.def}.

\vspace{-1mm}
 \begin{theorem}\label{sufficient.mainthm}
 Let  ${\bf x}\in {\CC}^N $, $L$ be a separation parameter with $N/L\in \ZZ$,
  and  let ${\bf W}=\{ {\bf w}_r\}_{r=1}^{R}$   be a family of window functions with period $N$ extension
 such that 
 \begin{equation} \label{sufficient.mainthm.eq1}
 l({\bf w}_r)\le N/2
 \end{equation}
 for all $ 1\le r\le R$, and
 \begin{equation}\label{sufficient.mainthm.eq2}
 {\bf A}_{m}= \Big(\beta_r\big(m+ j{N}/{L}\big)\Big)_{1\le r\le R, 0\le j\le L-1} \end{equation}
 have rank $L$ for all $ 0\le m\le N/L-1$, where
 \begin{equation}\label{sufficient.mainthm.eq3} \beta_r(k)=\frac{1}{N}\sum_{n=0}^{N-1} |w_r(n)|^2 e^{-2\pi i kn/N}, \ 0\le k\le N-1.\end{equation}
 If
 the graph $\tilde {\mathcal G}({\bf x}, {\bf W}, L)$ in \eqref{graph2.def} is connected, then
  ${\bf x}$ can be recovered, up to a global phase,  from
magnitude measurements \eqref{stft.data}
of its multiple-window STFT.
 \end{theorem}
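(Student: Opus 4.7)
The plan is to pass from the magnitude data to the autocorrelation-type coefficients
$$c_r(m,\ell):=\sum_{\nu=0}^{N-1} x(\nu)\,\overline{x(\nu-\ell)}\,w_r(Lm-\nu)\,\overline{w_r(Lm-\nu+\ell)}$$
(indices taken modulo $N$), each of which is directly computable from the data as a fixed multiple of the inverse DFT in $k$ of $|X_{{\bf w}_r}(Lm,k)|^2$. I will extract $|{\bf x}|^2$ from the slice $\ell=0$, extract phase products $x(n)\overline{x(n')}$ from the extremal slice $\ell=l({\bf w}_r)-1$, and then use connectivity of $\tilde{\mathcal G}({\bf x},{\bf W},L)$ to pin ${\bf x}$ down up to a single global phase.

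For the first extraction, set $y(\nu):=|x(\nu)|^2$, so that $c_r(m,0)=(y\ast |{\bf w}_r|^2)(Lm)$ is a circular convolution sampled on the sublattice $L\ZZ$. Applying the $(N/L)$-point DFT in $m$ and invoking the standard aliasing identity expresses the outcome, for each $0\le k\le N/L-1$, as the matrix ${\bf A}_k$ from \eqref{sufficient.mainthm.eq2} acting on the vector $(\hat y(k+jN/L))_{j=0}^{L-1}$ (up to a fixed scalar), where $\hat y$ is the length-$N$ DFT of $y$. The rank-$L$ hypothesis makes this system uniquely solvable, so $\hat y$, and therefore $|x(\nu)|$ for every $\nu$, is recovered.

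For the second extraction, fix an edge $(n,n')\in \tilde E({\bf W},L)$ and choose $r,m$ with (say) $n\equiv Lm-a({\bf w}_r)$ and $n'\equiv Lm-a({\bf w}_r)-l({\bf w}_r)+1\pmod{N}$; set $\ell_0:=l({\bf w}_r)-1$. The key step, and the main obstacle, is to show that $c_r(m,\ell_0)$ reduces to the \emph{single} summand
$$c_r(m,\ell_0)=x(n)\,\overline{x(n')}\,w_r(a({\bf w}_r))\,\overline{w_r(a({\bf w}_r)+l({\bf w}_r)-1)}.$$
This is exactly where the supporting-length hypothesis \eqref{sufficient.mainthm.eq1} is essential: any nonzero contribution in the sum forces both $Lm-\nu$ and $Lm-\nu+\ell_0$ to land in the supporting interval of ${\bf w}_r$; since that interval consists of $l({\bf w}_r)\le N/2$ consecutive residues modulo $N$, the only pair of points in it whose cyclic difference equals $l({\bf w}_r)-1$ is the pair of endpoints, which forces $\nu=n$. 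The accompanying window factor is nonzero by \eqref{supporting.def2}, so dividing it out recovers $x(n)\overline{x(n')}$ exactly (the swap $n\leftrightarrow n'$ comes from $\ell=N-l({\bf w}_r)+1$ by conjugate symmetry).

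Combining the two extractions, $|x(\nu)|$ is known for every $\nu$ and $x(n)\overline{x(n')}$ is known for every edge of $\tilde{\mathcal G}({\bf x},{\bf W},L)$. Fix any root $n_0\in V({\bf x})$, impose the global phase choice $x(n_0)=|x(n_0)|$, and traverse a spanning tree of the connected graph: at each tree edge $(n,n')$ with $x(n')$ already determined (and nonzero, since $n'\in V({\bf x})$), divide the known product $x(n)\overline{x(n')}$ by $\overline{x(n')}$ to obtain $x(n)$, while indices outside $V({\bf x})$ give $x(\nu)=0$ by definition. This reconstructs ${\bf x}$ up to the global phase fixed at $n_0$, as desired.
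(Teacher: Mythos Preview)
Your proposal is correct and follows essentially the same route as the paper: the paper's Lemma~\ref{sufficient.lem1} is exactly your $\ell=0$ step (DFT in $m$, aliasing, then invertibility via the rank-$L$ hypothesis on ${\bf A}_m$), and the paper's equations \eqref{sufficient.mainthm.pf.eq3}--\eqref{sufficient.mainthm.pf.eq4} are exactly your $\ell=l({\bf w}_r)-1$ step, with the same ``only the endpoints survive'' argument from $l({\bf w}_r)\le N/2$, followed by propagation of phases along the connected graph.
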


For $L=1$, the full rank requirement \eqref{sufficient.mainthm.eq2}  
 becomes
\begin{equation}\label{L=1.eq}
\sum_{r=1}^R |\beta_r(m)|^2\ne 0\ {\rm  for\ all} \  0\le m\le N-1,\end{equation}
and   the set $\tilde E({\bf W}, L)$ of edges 
can be rewritten as
\begin{eqnarray} \label{graph2.def2.L=1} \tilde E({\bf W}, 1) &\hskip-0.08in := & \hskip-0.08in \Big\{(n, n'):\ n- n'=\pm ( l({\bf w}_r)-1)+N\ZZ
\nonumber\\
& & \quad {\rm  for\ some} \  1\le r\le R\Big\}.\end{eqnarray}
%
If we further assume that the signal ${\bf x}$ has its all components being nonzero (i.e.,
$V({\bf x})=\{0, \ldots, N-1\}$), one may verify from \eqref{graph2.def2.L=1} that
 $\tilde {\mathcal G}({\bf x}, {\bf W}, 1)$  
is connected if and only if
\begin{equation}\label{prime.eq} l({\bf w}_1)-1, \ldots, l({\bf w}_R)-1\ {\rm and} \ N \ {\rm are\ coprime}.
\end{equation}
Therefore applying Theorem \ref{sufficient.mainthm} with $L=1$, 
we obtain the following result,
which is given in \cite[Theorem 1]{eldarIEEESPL2015} for $R=1$, cf. \cite[Theorems 2.4]{bojarovska2016}.

\begin{corollary}\label{sufficient.mainthm.fullL=1}
 Let  ${\bf x}\in {\CC}^N$  have its all components being nonzero,
  and  ${\bf W}$ 
   be a family of window functions having period $N$ extension
  and satisfying
\eqref{sufficient.mainthm.eq1}, \eqref{L=1.eq} and \eqref{prime.eq}.
Then  ${\bf x}$ can be recovered, up to a global phase,  from
magnitude measurements \eqref{stft.data}
of its multiple-window STFT.
 \end{corollary}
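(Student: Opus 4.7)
The plan is to derive this corollary as a direct specialization of Theorem \ref{sufficient.mainthm} with $L=1$. To do so I need to verify each of the three hypotheses of that theorem for the present data: the supporting-length bound \eqref{sufficient.mainthm.eq1}, the rank condition \eqref{sufficient.mainthm.eq2}, and the connectivity of $\tilde{\mathcal G}({\bf x},{\bf W},1)$.

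The first hypothesis is assumed outright. For the second, I will observe that when $L=1$ the index range $0\le j\le L-1$ degenerates to $j=0$, so ${\bf A}_m$ collapses to the single column $(\beta_r(m))_{r=1}^R$. The rank-$L=1$ requirement is then precisely the nonvanishing of this column, i.e.\ $\sum_{r=1}^R |\beta_r(m)|^2\ne 0$, which is exactly the stated hypothesis \eqref{L=1.eq}.

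The substantive step is the verification of connectivity under \eqref{prime.eq}. Since every component of ${\bf x}$ is nonzero, $V({\bf x})=\{0,1,\ldots,N-1\}$, which I identify with the additive group $\ZZ/N\ZZ$. The edge description \eqref{graph2.def2.L=1} then exhibits $\tilde{\mathcal G}({\bf x},{\bf W},1)$ as the undirected Cayley graph on $\ZZ/N\ZZ$ with symmetric generating set $S=\{\pm(l({\bf w}_r)-1)\bmod N:\ 1\le r\le R\}$. I will invoke the standard fact that such a Cayley graph is connected if and only if $S$ generates $\ZZ/N\ZZ$, which in turn holds if and only if $\gcd(l({\bf w}_1)-1,\ldots,l({\bf w}_R)-1,N)=1$. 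For the forward direction I would note that if $d:=\gcd(l({\bf w}_1)-1,\ldots,l({\bf w}_R)-1,N)>1$, then every walk starting at $0$ remains in the proper subgroup $d\ZZ/N\ZZ$, so $0$ and $1$ lie in different components; for the reverse direction, Bezout's identity produces an integer combination of the $l({\bf w}_r)-1$ congruent to $1$ modulo $N$, which corresponds to a walk from $0$ to $1$, and vertex-transitivity of Cayley graphs then delivers every vertex.

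With all three hypotheses of Theorem \ref{sufficient.mainthm} in place, its conclusion immediately gives recovery of ${\bf x}$ up to a global phase from the magnitudes $|X_{{\bf w}_r}(m,k)|$. The only nontrivial point in the plan is the Cayley-graph/coprimality equivalence in the third step; everything else is the mechanical substitution $L=1$ previewed in the text preceding the corollary.
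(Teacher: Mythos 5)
Your proposal is correct and follows essentially the same route as the paper: the paper also obtains the corollary by specializing Theorem \ref{sufficient.mainthm} to $L=1$, noting that the rank condition \eqref{sufficient.mainthm.eq2} collapses to \eqref{L=1.eq} and that, when $V({\bf x})=\{0,\ldots,N-1\}$, connectivity of $\tilde{\mathcal G}({\bf x},{\bf W},1)$ is equivalent to the coprimality condition \eqref{prime.eq}. Your Cayley-graph/Bezout argument merely spells out the equivalence the paper dismisses with ``one may verify,'' and it is a correct way to do so.
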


For  $L=N$, the full rank requirement \eqref{sufficient.mainthm.eq2}
can be rewritten as
\begin{equation}\label{sufficient.mainthm.eq2L=N}
{\rm the\  matrix} \ \big(|w_r(n)|^2\big)_{1\le r\le R, 0\le n\le N-1} \ {\rm has \ rank} \ N.
\end{equation}
Then applying Theorem \ref{sufficient.mainthm} with $L=N$ yields the following result on   phase retrievability with structured illuminations
and masks, cf. \cite{loewenbook}--\cite{candes14}.

\begin{corollary}\label{sufficient.mainthm.fullL=N}
 Let ${\bf W}$ 
   be a family of window functions having period $N$ extension
  and satisfying
\eqref{sufficient.mainthm.eq1} and \eqref{sufficient.mainthm.eq2L=N}.
Then any signal
 ${\bf x}\in {\CC}^N$ with $\tilde {\mathcal G}({\bf x}, {\bf W}, N)$ being connected
can be recovered, up to a global phase,  from
magnitude measurements \eqref{stft.data}
of its multiple-window STFT.
 \end{corollary}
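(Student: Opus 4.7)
The plan is to derive Corollary \ref{sufficient.mainthm.fullL=N} as a direct specialization of Theorem \ref{sufficient.mainthm} to the case $L = N$, by verifying that the abstract full‑rank hypothesis \eqref{sufficient.mainthm.eq2} collapses to the concrete hypothesis \eqref{sufficient.mainthm.eq2L=N} in this extreme regime. Since \eqref{sufficient.mainthm.eq1} appears verbatim in both statements and the connectivity of $\tilde{\mathcal G}({\bf x}, {\bf W}, N)$ is assumed outright, once this equivalence is in hand the conclusion follows with no further work.

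First I would observe that setting $L = N$ forces $N/L = 1$, so that the index $m$ in \eqref{sufficient.mainthm.eq2} takes only the value $m = 0$, while the column index $j$ runs through $\{0, 1, \ldots, N-1\}$. Thus the single relevant matrix is
\begin{equation*}
{\bf A}_0 = \bigl(\beta_r(j)\bigr)_{1 \le r \le R,\ 0 \le j \le N-1},
\end{equation*}
an $R \times N$ array whose $r$-th row is precisely the discrete Fourier transform of the nonnegative vector $(|w_r(n)|^2)_{n=0}^{N-1}$, up to the normalization $1/N$ built into \eqref{sufficient.mainthm.eq3}.

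Next I would exploit this Fourier structure by writing ${\bf A}_0 = {\bf B}\,{\bf F}$, where ${\bf B} = (|w_r(n)|^2)_{1 \le r \le R,\ 0 \le n \le N-1}$ is the $R \times N$ matrix appearing in \eqref{sufficient.mainthm.eq2L=N} and ${\bf F}$ is the $N \times N$ matrix with entries $F(n, j) = \frac{1}{N} e^{-2\pi i j n / N}$. Because ${\bf F}$ is a nonzero scalar multiple of the standard unitary DFT matrix, it is invertible; therefore $\mathrm{rank}({\bf A}_0) = \mathrm{rank}({\bf B})$. In particular, ${\bf A}_0$ has rank $L = N$ if and only if ${\bf B}$ has rank $N$, which is exactly \eqref{sufficient.mainthm.eq2L=N}. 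With this equivalence established, all three hypotheses of Theorem \ref{sufficient.mainthm} are satisfied, and the theorem's conclusion yields the desired phase retrievability of ${\bf x}$ up to a global phase.

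There is no genuine obstacle here; the argument is purely bookkeeping. The only point at which I would slow down is the rank equality $\mathrm{rank}({\bf A}_0) = \mathrm{rank}({\bf B})$: I want to confirm that the factor $1/N$ in \eqref{sufficient.mainthm.eq3} is absorbed cleanly into ${\bf F}$ so that ${\bf F}$ is genuinely invertible (rather than merely rank‑preserving up to a zero row or column), and that the DFT conventions used in \eqref{sufficient.mainthm.eq3} match those of the factorization. Both are routine to check, after which the corollary is proved.
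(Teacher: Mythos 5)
Your proposal is correct and takes essentially the same route as the paper: the paper obtains this corollary simply by applying Theorem \ref{sufficient.mainthm} with $L=N$, asserting without detail that the rank condition \eqref{sufficient.mainthm.eq2} rewrites as \eqref{sufficient.mainthm.eq2L=N}. Your factorization ${\bf A}_0={\bf B}\,{\bf F}$ with ${\bf F}$ an invertible scaled DFT matrix is exactly the (omitted) justification of that rewriting, so you have merely made the paper's one-line reduction explicit.
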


 To prove Theorem \ref{sufficient.mainthm}, we need a technical lemma.  

 \begin{lemma}\label{sufficient.lem1}
 Let $L$ be a separation parameter with $N/L\in \ZZ$ and  ${\bf W}=\{ {\bf w}_r\}_{r=1}^{R}$   be a family of window functions having period $N$ extension
 and satisfying \eqref{sufficient.mainthm.eq2}.
 Then
magnitudes of any  signal ${\bf x}=(x(0),x(1),x(2),\ldots,x(N-1))^\top\in {\CC}^N $ can be recovered from
magnitude measurements \eqref{stft.data}
of its  multiple-window STFT. Moreover,
 \begin{eqnarray}  \label{sufficient.thm1.eq2}
& \hskip-0.08in &  \hskip-0.08in  |x(n)|^2 = \frac{L}{N} \sum_{m, m'=0}^{N/L-1}  \sum_{j, j'=0}^{L-1} e^{-2\pi i (m (m' L-n)/N-jn/L)}\nonumber\\
 & \hskip-0.08in &  \hskip-0.08in \quad  \times  a_m(j,j')
\Big(\sum_{r=1}^R \overline{\beta_r(m+j'N/L)}   Z({\bf w}_r, m')\Big)
 \end{eqnarray}
 for all $0\le n\le N-1$,
 where  $$({\bf A}_m^H {\bf A}_m)^{-1}=(a_m(j, j'))_{0\le j, j'\le L-1}$$
  and
 $$Z({\bf w}_r, m)= \sum_{k=0}^{N-1}|X_{{\bf w}_r}(Lm,k)|^2 ,\  0\le m\le N/L-1. $$
 \end{lemma}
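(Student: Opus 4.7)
The plan is to reduce the problem to a downsampled deconvolution. First I would fix a window $\mathbf{w}_r$ and compute
$Z(\mathbf{w}_r,m)=\sum_{k=0}^{N-1}|X_{\mathbf{w}_r}(Lm,k)|^2$ directly from \eqref{stft.def}: expanding the squared modulus and applying the orthogonality relation $\sum_{k=0}^{N-1}e^{-2\pi i k(n-n')/N}=N\delta_{n,n'\bmod N}$ collapses the double sum into a single one, yielding the cyclic convolution
\[
 Z(\mathbf{w}_r,m)=\frac{1}{N}\sum_{n=0}^{N-1}|x(n)|^2\,|w_r(Lm-n)|^2.
\]
Thus $LZ(\mathbf{w}_r,\cdot)$ is the $L$-fold decimation of $|x|^2*|w_r|^2$ (viewed cyclically on $\mathbb{Z}/N\mathbb{Z}$), and $R$ such observations for every $m=0,\dots,N/L-1$ are available.

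Next I would move to the frequency side. Let $u(n):=|x(n)|^2$ and let $\hat{u}(k)=\sum_{n}u(n)e^{-2\pi i kn/N}$ denote its $N$-point DFT, so that by \eqref{sufficient.mainthm.eq3} the DFT of $|w_r|^2$ equals $N\beta_r$. Taking the $(N/L)$-point DFT of $Z(\mathbf{w}_r,\cdot)$ in the variable $m$ and invoking the standard aliasing identity for decimation-by-$L$, the cross terms collapse whenever the residue class modulo $N/L$ matches, producing
\[
 \hat Z_r(m'):=\sum_{m=0}^{N/L-1}Z(\mathbf{w}_r,m)e^{-2\pi i mm'L/N}
 =\frac{1}{L}\sum_{j=0}^{L-1}\beta_r\bigl(m'+jN/L\bigr)\,\hat{u}\bigl(m'+jN/L\bigr)
\]
for each $0\le m'\le N/L-1$. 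Stacking $r=1,\dots,R$ gives, for every fixed $m'$, an $R\times L$ linear system whose coefficient matrix is precisely $\mathbf{A}_{m'}$ of \eqref{sufficient.mainthm.eq2}.

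The hypothesis that $\mathbf{A}_{m'}$ has rank $L$ makes $\mathbf{A}_{m'}^H\mathbf{A}_{m'}$ invertible, so the normal equations give
\[
 \hat{u}(m'+jN/L)=L\sum_{j'=0}^{L-1}a_{m'}(j,j')\sum_{r=1}^{R}\overline{\beta_r(m'+j'N/L)}\,\hat Z_r(m').
\]
This recovers $\hat{u}$ on all of $\{0,\dots,N-1\}$, and hence magnitudes $|x(n)|$ for every $n$, proving the first assertion. Finally I would invert the $N$-point DFT $u(n)=\frac{1}{N}\sum_{k=0}^{N-1}\hat{u}(k)e^{2\pi ikn/N}$ by reindexing $k=m+jN/L$ with $0\le m\le N/L-1$ and $0\le j\le L-1$, substituting the expression for $\hat{u}(m+jN/L)$ and the defining sum of $\hat Z_r(m)$, and collecting the exponentials via
\[
 e^{2\pi imn/N}e^{2\pi ijn/L}e^{-2\pi imm'L/N}=e^{-2\pi i(m(m'L-n)/N-jn/L)}.
\]
This produces exactly the closed form \eqref{sufficient.thm1.eq2}.

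The conceptual work is the aliasing step that turns the downsampled convolutions into $L$-tuples of linear equations in shifted DFT frequencies of $|x|^2$; the remaining task is only bookkeeping of normalizations and a careful substitution to combine the three exponential factors into the single phase in \eqref{sufficient.thm1.eq2}. The main obstacle I anticipate is not analytic but notational: keeping the four indices $m,m',j,j'$ and the $N$-versus-$N/L$ normalizations consistent throughout the aliasing identity and the final recombination.
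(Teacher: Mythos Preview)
Your proposal is correct and follows essentially the same route as the paper: expand $Z(\mathbf{w}_r,m)$ as the cyclic convolution $\frac{1}{N}\sum_n|x(n)|^2|w_r(Lm-n)|^2$, take the $(N/L)$-point DFT in $m$ to obtain the aliased system with coefficient matrix $\mathbf{A}_m$, invert via normal equations, and then inverse-DFT with the reindexing $k=m+jN/L$ to assemble \eqref{sufficient.thm1.eq2}. The only cosmetic differences are your use of the unnormalized DFT $\hat u$ in place of the paper's $\alpha(k)=\hat u(k)/N$ and your swap of the symbols $m,m'$, so just be careful that these conventions are tracked consistently when you write out the final substitution.
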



%

%

We postpone the proof of Lemma  \ref{sufficient.lem1} to the end of this section and start the proof of Theorem \ref{sufficient.mainthm}.

\begin{proof}[Proof of Theorem \ref{sufficient.mainthm}]
By Lemma \ref{sufficient.lem1},
 $|x(n)|^2, 0\le n\le N-1$, are determined from 
$|X_{{\bf w}_r}(Lm,k)|^2, 1\le r\le R, 0\le m\le L/N-1, 0\le k\le N-1$.
Therefore it remains to find $x(n)/|x(n)|, n\in  V({\bf x})$, up to a global phase.
From connectivity of the 
 graph $\tilde {\mathcal G}({\bf x}, {\bf W}, L)$,
it suffices to show that for endpoints $n_1, n_2$ of any edge,
the phase difference between  $x(n_1)/|x(n_1)|$ and $x(n_2)/|x(n_2)|$ is determined from
magnitude measurements \eqref{stft.data}
of the multiple-window STFT. 

By the assumption on vertices $n_1$ and $n_2$, there exist $1\le r\le R$ and $0\le m\le N/L-1$ such that
$l({\bf w}_r)\ge 2 $ 
and
\begin{equation}\label{n1n2.eq} n_1, n_2\in Lm-a({\bf w}_r)  -\{0, l({\bf w}_r)-1\}+N\ZZ.\end{equation}
Without loss of generality, we assume that
\begin{equation}  \label{sufficient.mainthm.pf.eq2}
n_1\in Lm- a({\bf w}_r)+N\ZZ\ {\rm and} \ n_2\in Lm- a({\bf w}_r)-l({\bf w}_r)+1+N\ZZ.
\end{equation}
By \eqref{supporting.def2}, \eqref{supporting.def3} and 
\eqref{sufficient.mainthm.eq1}, we have
\begin{equation}  \label{sufficient.mainthm.pf.eq3}
w_r(n) \overline{w_r(n+l({\bf w}_r)-1)}
\ne 0\ {\rm if \  and \ only\ if} \ n\in a({\bf w}_r)+N\ZZ.
\end{equation}
From \eqref{sufficient.mainthm.pf.eq3} we obtain
\begin{eqnarray} \label{sufficient.mainthm.pf.eq4-}
 &\hskip-0.08in  &  \hskip-0.08in N\sum_{k=0}^{N-1}|X_{{\bf w}_r}(Lm,k)|^2 e^{i2\pi k (l({\bf w}_r)-1)/N}\nonumber\\
& \hskip-0.08in=& \hskip-0.08in  \sum_{n=0}^{N-1} x(n + l({\bf w}_r)-1 \ {\rm mod }\ N) \overline{x(n)}\nonumber\\
& \hskip-0.08in & \hskip-0.08in \quad  \times w_r(Lm-n- l({\bf w}_r)+1) \overline{w_r(Lm-n)}\nonumber\\
& \hskip-0.08in = & \hskip-0.08in x(n_1)  \overline{x(n_2)} w_r(a({\bf w}_r)) \overline{w_r(a({\bf w}_r)+l({\bf w}_r)-1)}.\nonumber\\
\end{eqnarray}
Therefore
\begin{eqnarray} \label{sufficient.mainthm.pf.eq4}
\hskip-0.2in & \hskip-0.2in & \hskip-0.1in\frac{x(n_1)}{|x(n_1)|} \cdot  \frac{\overline{x(n_2)}}{|x(n_2)|}= \frac{w_r(a({\bf w}_r)+l({\bf w}_r)-1)\overline{w_r(a({\bf w}_r)) }} {|w_r(a({\bf w}_r)+l({\bf w}_r)-1) \overline{w_r(a({\bf w}_r))} |}  
\nonumber\\
\hskip-0.2in&  \hskip-0.2in & \quad  \qquad \times
\frac{\sum_{k=0}^{N-1}|X_{w_r}(Lm,k)|^2 e^{i2\pi k (l({\bf w}_r)-1)/N}}
{\Big|\sum_{k=0}^{N-1}|X_{w_r}(Lm,k)|^2 e^{i2\pi k (l({\bf w}_r)-1)/N}\Big|}.
\end{eqnarray}
Combining \eqref{sufficient.mainthm.pf.eq2}, \eqref{sufficient.mainthm.pf.eq3} and \eqref{sufficient.mainthm.pf.eq4}
shows that the phase difference between $x(n_1)/|x(n_1)|$ and $x(n_2)/|x(n_2)|$
 is determined from
magnitude measurements \eqref{stft.data}
of the multiple-window STFT.
This 
completes the proof.
\end{proof}

We finish this section with the proof of Lemma \ref{sufficient.lem1}.

\begin{proof} [Proof of Lemma \ref{sufficient.lem1}]  For  $0\le m'\le N/L-1$ and $1\le r\le R$, we have
\begin{eqnarray*}
Z({\bf w}_r, m')  
&\hskip-0.08in =& \hskip-0.08in
  \frac{1}{N} \sum_{n=0}^{N-1}|x(n)|^2|w_r(Lm'-n)|^2\\
& \hskip-0.08in  = & \hskip-0.08in   \sum_{k=0}^{N-1}   \alpha(k) \beta_r(k) e^{2\pi i m'kL/N},
\end{eqnarray*}
where
$$\alpha(k)=\frac{1}{N}\sum_{n=0}^{N-1} |x(n)|^2 e^{-2\pi i kn/N}, \ 0\le k\le N-1.$$
Hence  for  $0\le m\le N/L-1$ and $1\le r\le R$, we obtain
\begin{eqnarray}\label{sufficient.thm1.pf.eq2}
\hskip-0.1in & \hskip-0.1in & \hskip-0.1in \frac{L}{N} \sum_{m'=0}^{N/L-1} Z({\bf w}_r, m') e^{-2\pi i m m' L/N}\nonumber\\
\hskip-0.1in & \hskip-0.1in &  \hskip-0.1in\qquad  =   \sum_{j'=0}^{L-1} \alpha(m+j'N/L) \beta_r(m+j'N/L).
\end{eqnarray}
Then  we get
\begin{eqnarray*}
\hskip-0.22in \alpha\big(m+jN/L\big) & \hskip-0.08in = & \hskip-0.08in \frac{L}{N}
\sum_{j'=0}^{L-1}\sum_{m'=0}^{N/L-1}  a_m(j,j')  e^{-2\pi i m m' L/N}\nonumber\\
& \hskip-0.08in &  \hskip-0.08in  \times
\Big(\sum_{r=1}^R \overline{\beta_r(m+j'N/L)}   Z({\bf w}_r, m')\Big),
\end{eqnarray*}
where $0\le m\le N/L-1$ and $0\le j\le L-1$.
This together with
\begin{equation} \label{sufficient.thm1.pf.eq1}
|x(n)|^2=\sum_{k=0}^{N-1} \alpha(k)  e^{2\pi i kn/N}, \ 0\le n\le N-1,
\end{equation}
proves \eqref{sufficient.thm1.eq2}.
\end{proof}

\vskip-0.08in

\section {Reconstruction algorithm and error estimates}
\label{algorithm.section}

Consider
the family ${\bf W}=\{{\bf w}_r\}_{r=1}^R$  of window functions  having
 period $N$ extension and satisfying \eqref{sufficient.mainthm.eq1} and \eqref{sufficient.mainthm.eq2}.
From Theorem   \ref{sufficient.mainthm}, it follows that
 any signal ${\bf x}=(x(0), \ldots, x(N-1))^\top$ with a connected graph $\tilde {\mathcal G}({\bf x}, {\bf W}, L)$
 can be reconstructed, up to a global phase, from
 magnitude measurements  $|X_{{\bf w}_r}(Lm,k)|, 1\le r\le R, 0\le m\le N/L-1, 0\le k\le N-1$,
 of its multiple-window STFT.
From the constructive proof of Theorem \ref{sufficient.mainthm},  we propose the following reconstruction algorithm:

 \begin{itemize}
 \item[{1.}]   Apply \eqref{sufficient.thm1.eq2} to find magnitudes $|x(n)|, 0\le n\le N-1$.

 \item[{2.}]  Create the graph $\tilde G({\bf x}, {\bf W}, L)$ in \eqref{graph2.def} and verify its connectivity.

 \item [{3.}] Apply  \eqref{sufficient.mainthm.pf.eq4} to find  phase difference between
 $\frac{x(n_1)}{|x(n_1)|}$ and $\frac{x(n_2)}{|x(n_2)|}$, where $n_1, n_2$ are endpoints of an
 edge of the graph $\tilde G({\bf x}, {\bf W}, L)$, provided that $\tilde G({\bf x}, {\bf W}, L)$ is connected.

 \end{itemize}

The  reconstruction algorithm proposed above indicates that 
 ${\bf x}$ can be recovered, up to a global phase, from its
$2NR/L$ measurements
 $ \sum_{k=0}^{N-1}|X_{w_r}(Lm,k)|^2 e^{i2\pi k (l({\bf w}_r)-1)/N}$ and $\sum_{k=0}^{N-1}|X_{w_r}(Lm,k)|^2$, where
$1\le r\le R, 0\le m\le N/L-1$.

For a window family ${\bf W}=\{{\bf w}_r\}_{r=1}^R$ with period $N$ extension,
 we set
  $\|{\bf W}\|_2=\big(\sum_{r=1}^R\sum_{n=0}^{N-1} |w_r(n)|^2\big)^{1/2} $,
     $\|{\bf W}\|_*=\min_{1\le r\le R} |w_r(a({\bf w}_r)) w_r(a({\bf w}_r)+l({\bf w}_r)-1)|$, and
     $\|{\bf A}\|_1=\sum_{m=0}^{N/L-1}\sum_{j,j'=0}^{L-1}|a_m(j,j')|$.
  For  the scenario that magnitude measurements \eqref{stft.data} of the multiple-window STFT are corrupted, 
  we have the following error estimate between the original signal  ${\bf x}$ and
the  approximation  ${\bf x}_{\pmb \epsilon}$ obtained from the proposed reconstructed algorithm with the corrupted  magnitude  measurements  \eqref{noise.def}.

\begin{theorem}\label{arror.thm}
 Let   $L, {\bf W}$ and ${\bf x}=(x(0), \ldots, x(N-1))^\top$  be as in Theorem \ref{sufficient.mainthm},
and
${\bf x}_{\pmb \epsilon}=(x_{\pmb \epsilon}(0), \ldots, x_{\pmb \epsilon}(N-1))^\top$  be the  approximation obtained from the proposed reconstructed algorithm with the corrupted  magnitude data  \eqref{noise.def}.  
           If
\begin{eqnarray}\label{error.thm.eq1}
\hskip-0.18in |{\pmb \epsilon}| & \hskip-0.08in \le  & \hskip-0.08in \frac{\min_{n\in V({\bf x})} |x(n)|^2}
{4 \|{\bf A}\|_1 \|{\bf W}\|_2^2},
\end{eqnarray}
 then there exists $\beta\in \RR$ such that
 \begin{equation}\label{error.thm.eq2}
 \big||x_{\pmb \epsilon}(n)|^2- |x(n)|^2\big| \le   \|{\bf A}\|_1 \|{\bf W}\|_2^2
  |{\pmb \epsilon}| 
 \end{equation}
  for all $0\le n\le N-1$,
 and
 \begin{equation}\label{error.thm.eq3}
 \Big|\frac{x_{\pmb \epsilon}(n)}{|x_{\pmb \epsilon}(n)|} - e^{2\pi i\beta}\frac{x(n)}{|x(n)|} \Big|
 \le \frac{2N^3 |{\pmb \epsilon}|}{\|{\bf W}\|_* \min_{n\in V({\bf x})} |x(n)|^2}
  \end{equation}
  for all $n\in V({\bf x})$.
\end{theorem}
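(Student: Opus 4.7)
The plan is to analyze the two stages of the reconstruction algorithm separately: first the magnitude recovery from formula \eqref{sufficient.thm1.eq2}, and then the phase propagation across the connected graph $\tilde{\mathcal G}({\bf x},{\bf W},L)$ via formula \eqref{sufficient.mainthm.pf.eq4}. The noisy algorithm uses $Z_{\pmb\epsilon}({\bf w}_r,m'):=\sum_{k=0}^{N-1}Y(r,m',k)$ in place of the true $Z({\bf w}_r,m')$, and because $Z_{\pmb\epsilon}-Z$ is a sum of $N$ noise samples, it is pointwise bounded by $N|{\pmb\epsilon}|$.

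For the magnitude estimate \eqref{error.thm.eq2}, I would substitute $Z_{\pmb\epsilon}$ into \eqref{sufficient.thm1.eq2}, subtract the clean version, and apply the triangle inequality. The key auxiliary bound is $\sum_{r=1}^{R}|\beta_r(k)|\le N^{-1}\|{\bf W}\|_2^2$, which follows directly from \eqref{sufficient.mainthm.eq3} by the triangle inequality. Combining this with the definition of $\|{\bf A}\|_1$ and collecting the factors of $L/N$, $N/L$ and $N$ gives exactly $\|{\bf A}\|_1\|{\bf W}\|_2^2|{\pmb\epsilon}|$. As an immediate byproduct, the smallness hypothesis \eqref{error.thm.eq1} forces $|x_{\pmb\epsilon}(n)|^2\ge\tfrac{3}{4}|x(n)|^2>0$ for every $n\in V({\bf x})$, so the phase denominators in the next stage are well defined.

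For the phase estimate \eqref{error.thm.eq3}, I would fix a root vertex $n_0\in V({\bf x})$ and choose $\beta\in\RR$ so that $e^{2\pi i\beta}x(n_0)/|x(n_0)|=x_{\pmb\epsilon}(n_0)/|x_{\pmb\epsilon}(n_0)|$. For any other $n\in V({\bf x})$, connectivity of $\tilde{\mathcal G}({\bf x},{\bf W},L)$ supplies a path $n_0\to n_1\to\cdots\to n_k=n$ with $k\le|V({\bf x})|-1\le N-1$, and both $x(n)/|x(n)|$ and $x_{\pmb\epsilon}(n)/|x_{\pmb\epsilon}(n)|$ can be written, via the telescoping product of edge ratios obtained from \eqref{sufficient.mainthm.pf.eq4} on clean respectively noisy data, as a unit-modulus factor times the root phase. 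Since each edge ratio has modulus one, the product error telescopes into a sum, so it suffices to control one single-edge perturbation and multiply by the path length.

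The heart of the argument, and the step I expect to be the main obstacle, is the per-edge estimate. Writing $T:=\sum_{k=0}^{N-1}|X_{{\bf w}_r}(Lm,k)|^2 e^{i2\pi k(l({\bf w}_r)-1)/N}$ and $T_{\pmb\epsilon}$ for its noisy counterpart, identity \eqref{sufficient.mainthm.pf.eq4-} furnishes the lower bound $|T|\ge N^{-1}\|{\bf W}\|_*\min_{n\in V({\bf x})}|x(n)|^2$, whereas $|T_{\pmb\epsilon}-T|\le N|{\pmb\epsilon}|$. Feeding these into the elementary perturbation inequality $\bigl|u/|u|-v/|v|\bigr|\le 2|u-v|/|v|$ (applicable once the smallness hypothesis forces $|T_{\pmb\epsilon}-T|\le|T|/2$) yields a single-edge error of at most $2N^2|{\pmb\epsilon}|/(\|{\bf W}\|_*\min_{n\in V({\bf x})}|x(n)|^2)$. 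Summing along a path of length at most $N$ produces \eqref{error.thm.eq3}. The delicate bookkeeping is verifying that \eqref{error.thm.eq1} is strong enough to keep every $|T_{\pmb\epsilon}|$ and every $|x_{\pmb\epsilon}(n)|$ bounded away from zero uniformly along the walk; once that is secured, the remainder is a careful tracking of constants.
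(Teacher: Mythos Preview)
Your proposal is correct and follows essentially the same route as the paper: the magnitude bound via \eqref{sufficient.thm1.eq2} together with $\sum_{r}|\beta_r(k)|\le N^{-1}\|{\bf W}\|_2^2$, the per-edge phase bound via \eqref{sufficient.mainthm.pf.eq4-} and \eqref{sufficient.mainthm.pf.eq4}, and then multiplication by the graph diameter $\le N$. One simplification: the inequality $\bigl|u/|u|-v/|v|\bigr|\le 2|u-v|/|v|$ holds for all nonzero $u,v$ (write $u|v|-v|u|=u(|v|-|u|)+(u-v)|u|$), so the ``delicate bookkeeping'' to force $|T_{\pmb\epsilon}-T|\le|T|/2$ from \eqref{error.thm.eq1} is not actually needed, and the paper indeed does not invoke it.
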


\begin{proof}
The estimate \eqref{error.thm.eq2} follows   
from \eqref{sufficient.thm1.eq2}
and
$$ \sum_{r=1}^R |\beta_r(k)|\le \frac{\|{\bf W}\|_2^2}{N}, \ 0\le k\le N-1. $$
\vskip-0.06in

Take endpoints $n_1, n_2$ of  an edge of the graph $\tilde {\mathcal G}({\bf x}, {\bf W}, L)$,
and select $1\le r\le R$ and $0\le m\le N/L-1$ so that
\eqref{n1n2.eq} holds.  
Set $c:=\sum_{k=0}^{N-1}|X_{w_r}(Lm,k)|^2 e^{i2\pi k (l({\bf w}_r)-1)/N}$
and $c_{\pmb \epsilon}:= \sum_{k=0}^{N-1}(|X_{w_r}(Lm,k)|^2 +{\pmb \epsilon}(r, m, k)) e^{i2\pi k (l({\bf w}_r)-1)/N}$.
Then
\begin{equation}  \label{error.thm.pf.eq2}
|c|\ge \frac{\|{\bf W}\|_*}{N} \min_{n\in V({\bf x})} |x(n)|^2
\end{equation}
\vskip-0.06in
\noindent by \eqref{sufficient.mainthm.pf.eq4-}, and
\begin{equation}\label{error.thm.pf.eq1}
|c_{\pmb \epsilon}-c|\le N |{\pmb \epsilon}|.  
\end{equation}
\vskip-0.06in
Combining  \eqref{sufficient.mainthm.pf.eq4}, \eqref{error.thm.pf.eq2} and
\eqref{error.thm.pf.eq1}, we obtain
\begin{eqnarray*}\label{phasedif.error}
 &\hskip-0.08in  &  \hskip-0.08in\Big|\frac{x_{\pmb \epsilon} (n_1)}{|x_{\pmb \epsilon}(n_1)|}\frac{\overline{x_{\pmb \epsilon} (n_2)}}{|x_{\pmb \epsilon}(n_2)|}-\frac{x(n_1)}{|x(n_1)|}\frac{\overline{x(n_2)}}{|x(n_2)|}\Big|\nonumber\\
& \hskip-0.08in = &  \hskip-0.08in \Big| \frac{c_{\pmb \epsilon}}{|c_{\pmb \epsilon}|}-\frac{c}{|c|}\Big|
\le \frac{|c_{\pmb \epsilon}-c|+ ||c_{\pmb \epsilon}|-|c|| }{|c|} \nonumber\\ 
& \hskip-0.08in \le & \hskip-0.08in \frac{2N^2|{\pmb \epsilon}|}
{\|{\bf W}\|_* \min_{n\in V({\bf x})} |(x(n)|^2   } .
\end{eqnarray*}
This, together with  the fact that the diameter of the graph $\tilde {\mathcal G}({\bf x}, {\bf W}, L)$ is at most $N$,
proves \eqref{error.thm.eq3}.
\end{proof}

\begin{remark} {\rm  Let $\tilde {\bf x}_{\pmb \epsilon}=(\tilde x_{\pmb \epsilon} (0), \ldots, \tilde x_{\pmb \epsilon}(N-1))^\top$, where
$$|\tilde x_{\pmb \epsilon}(n)|= \left\{\begin{array}{ll}
0,   & {\rm if}  \ |x_{\pmb \epsilon}(n)|  \le \frac{1}{2}\min_{n\in V({\bf x})} |x(n)| \\
|x_{\pmb \epsilon}(n)|, &  {\rm otherwise}\end{array}\right.$$
for $0\le n\le N-1$.  Then it follows from  \eqref{error.thm.eq1} and \eqref{error.thm.eq2}  that $V({\bf x})=V(\tilde {\bf x}_{\pmb \epsilon})$
and
$\tilde G({\bf x}, {\bf W}, L)=\tilde G(\tilde {\bf x}_{\pmb \epsilon}, {\bf W}, L)$. 
This indicates that the graph $\tilde G({\bf x}, {\bf W}, L)$ associated with the original signal ${\bf x}$ 
could be found in the noisy environment.
}
\end{remark}



\section {Conclusions}

For multiple windows with small supporting lengths, 
certain constraints on
the support   of a signal could be crucial for its phase retrievability from magnitude measurements of its multiple-window STFT.
The proposed reconstruction algorithm  from corrupted magnitude measurements  yields a good approximation to the original signal
 if we have some priori knowledge on the  noise level and the  minimal magnitude of nonzero components of the original signal.

\begin{thebibliography}{99}

\bibitem{walther63} A. Walther, The question of phase retrieval in optics,
{\em J. Mod. Opt.}, {\bf 10}(1963), 1--49.

    \bibitem{F78}
J. R. Fienup, Reconstruction of an object from the modulus of its Fourier transform, {\em Opt. Lett.}, {\bf 3}(1978), 27--29.


\bibitem{Fienup82}  J. R. Fienup,
Phase retrieval algorithms: a comparison,
{\em Appl. Opt.}, {\bf  21}(1982),  2758--2769.

\bibitem{millane90} R. P. Millane, Phase retrieval in crystallography and optics,
{\em  J. Opt. Soc. Am. A}, {\bf 7}(1990), 394--411.


\bibitem{harrison93} R. W. Harrison,  Phase problem in crystallography,
{\em  J. Opt. Soc. Am. A}, {\bf 10}(1993),  1046--1055.


\bibitem{hurtbook}
 N. E. Hurt, {\em Phase Retrieval and Zero Crossing: Mathematical Methods in Image Reconstruction},  
 Springer, 2001.

\bibitem{Ba2} R. Balan, P. G. Casazza and D. Edidin,    On signal reconstruction without phase, {\em Appl. Comput. Harmon.  Anal.},
 {\bf 20}(2006),  345--356.

\bibitem{schechtman15}   Y.  Shechtman,
Y. C. Eldar, O. Cohen, H. N. Chapman, J.  Miao and M. Segev,
Phase retrieval with application to optical imaging: a contemporary overview,
{\em  IEEE Signal Proc. Mag.}, {\bf 32}(2015),     87--109.

\bibitem{jaganathany15} K. Jaganathan, Y. C. Eldar and  B. Hassibi,
Phase retrieval: an overview of recent developments, arXiv 1510.07713

\bibitem{cohenbook} L. Cohen, {\em Time-Frequency Analysis}, Prentice Hall, Englewood Cliffs, NJ, 1995.

\bibitem{grochenigbook} K. Gr\"ochenig, {\em Foundation of Time-Frequency Analysis}, Birkh\"auser, 2000.

\bibitem{lim79}  J. S. Lim and A. V. Oppenheim, Enhancement and bandwidth
compression  of  noisy  speech,  {\em  Proc. IEEE},  {\bf 67}(1979),  1586--1604.


\bibitem{nawabieee83}  S. H. Nawab, T. F. Quatieri and J. S. Lim, Signal reconstruction
from short-time Fourier transform magnitude,
{\em IEEE Trans. Acoust., Speech, Signal Processing},
 {\bf 31}(1983), 986--998.

 \bibitem{Griffin84}  D.  Griffin  and  J.  S.  Lim,  Signal  estimation  from  modified
short-time Fourier transform, {\em  IEEE Trans. Acoust., Speech, Signal Processing},
{\bf 32}(1984), 236--243.

\bibitem{trebino02}  R.  Trebino,  {\em Frequency-resolved  Optical  Gating:  The  Measurement of Ultrashort Laser Pulses}, Springer, 2002.


\bibitem{rodenburg08}  J.  M.  Rodenburg,  Ptychography  and  related  diffractive  imaging  methods,
{\em Adv. Imag. Electr. Phys.}, {\bf 150}(2008), 87--184.

\bibitem{eldarIEEESPL2015} Y. C. Eldar, P. Sidorenko, D. G. Mixon, S. Barel and O. Cohen, Sparse phase
retrieval from short-time Fourier measurements, {\em  IEEE Signal Process. Lett.}, {\bf  22}(2015),  638--642.

\bibitem{bojarovska2016} I. Bojarovska and A. Flinth, Phase retrieval from Gabor measurements, {\em J. Fourier Anal. Appl.}, {\bf 22}(2016), 542--567.

\bibitem{eldararxiv2015} K. Jaganathan, Y. C. Eldar and B. Hassibi,
STFT phase retrieval: uniqueness guarantees and
recovery algorithms,   {\em IEEE J. Sel. Topics Signal Process.}, {\bf 10}(2016), 770--781.

\bibitem{loewenbook} E. G. Loewen and E. Popov, {\em Diffraction Gratings and Applications}, CRC Press, 1997.

\bibitem{liu08}
Y. J. Liu, B. Chen, E. R. Li, J. Y. Wang, A. Marcelli, S. W. Wilkins, H. Ming, Y. C. Tian, K. A. Nugent, P. P. Zhu and Z. Y. Wu,
Phase retrieval in X-ray imaging based on using structured illumination,
{\em Phys. Rev. A}, {\bf  78}(2008), 023817.

\bibitem{candes13} E. J. Candes, Y. C. Eldar, T. Strohmer and V. Voroninski,
Phase retrieval via matrix completion,
{\em  SIAM J. Imaging Sci.}, {\bf 6}(2013), 199--225.

\bibitem{candes14}  E.  J.  Candes,  X.  Li  and  M.  Soltanolkotabi,  Phase  retrieval
from  coded  diffraction  patterns,  {\em Appl. Comput. Harmon.  Anal.},
{\bf 39}(2015), 277--299.

\bibitem{ccsw2016}
Y. Chen, C. Cheng, Q. Sun and H. Wang, Phase retrieval of real-valued signals in  a shift-invariant space, arXiv 1603.01592

\bibitem{gerchbert72}
R. W. Gerchberg and W. O. Saxton, A practical algorithm
for the determination of phase from image and diffraction
plane pictures, {\em Optik}, {\bf 35}(1972), 237--246.

\bibitem{berdory15} T. Bendory and Y. C. Eldar,  A least squares approach for stable phase retrieval from short-time Fourier transform magnitude,
    arXiv 1510.00920







%
%

\end {thebibliography}

\end{document}